 \documentclass[conference]{IEEEtran}
\renewcommand{\baselinestretch}{0.95}
\IEEEoverridecommandlockouts
\usepackage[pdftex]{graphicx}
 \usepackage{lipsum,graphicx}
\usepackage{stackengine}

\usepackage{float}
\usepackage{amsmath,amssymb,amsfonts}
\usepackage{textcomp}
\usepackage{url}
\usepackage[T1]{fontenc}
\usepackage{mathtools, cuted}
\usepackage{microtype}
\usepackage{lipsum, color}
\usepackage{amsthm}
\newtheorem{theorem}{Theorem}

\newtheorem{lemma}{Lemma}

\usepackage{mathtools}
\usepackage{graphics}

\DeclarePairedDelimiter{\paren}{(}{)}
\DeclarePairedDelimiter{\abs}{\lvert}{\rvert}
\usepackage{algorithm}
\usepackage{algpseudocode}
\usepackage[nodisplayskipstretch]{setspace} 
\setstretch{0.94} 
\usepackage{float}
\usepackage{verbatim}
\usepackage{mathrsfs}
\usepackage{sansmath}
\usepackage{array}
\usepackage{cite}
\usepackage{lipsum}
\usepackage{mathtools}
\usepackage{icomma}
\usepackage{optidef}
\usepackage[font=footnotesize]{subcaption}
\usepackage[font=footnotesize]{caption}
\usepackage{mwe}
\usepackage{textgreek}
\usepackage{tabularx}
\usepackage{multirow}
\usepackage{stackengine}

\newcommand*{\Scale}[2][4]{\scalebox{#1}{$#2$}}

\ifCLASSINFOpdf
\else
\fi

\interdisplaylinepenalty=2500
\addtolength{\topmargin}{+0.04 in}
\begin{document}


\title{Reconfigurable Intelligent Surface-Aided NOMA with Limited Feedback}
\author{
\IEEEauthorblockN{Mojtaba~Ahmadi~Almasi and Hamid~Jafarkhani}
\thanks{The authors are with Center for Pervasive Communications and Computing,
University of California, Irvine. This work was supported in part by the NSF
Award CNS-2008786.}
}

\maketitle

\vspace{-1cm}
\begin{abstract}
The design of feedback channels in frequency division duplex (FDD) systems is a major challenge because of the limited available feedback bits. We consider non-orthogonal multiple access (NOMA) systems that incorporate reconfigurable intelligent surfaces (RISs). In limited feedback RIS-aided NOMA systems, the RIS-aided channel and the direct channel gains should be quantized and fed back to the transmitter. This paper investigates the rate loss of the overall RIS-aided NOMA systems suffering from quantization errors. We first consider random vector quantization for the overall RIS-aided channel and  identical uniform quantizers for the direct channel gains. We then obtain an upper bound  for the rate loss, due to the quantization error, as a function of the number of  feedback bits and the size of RIS. Our numerical results indicate  the sum rate performance of the limited feedback system approaches that of the system with full CSI as the number of  feedback bits increases.  
\end{abstract}
\section{Introduction}\label{sec:introduction}
Reconfigurable intelligent surfaces (RISs) are presumed as an attractive solution to enhance the spectral, power efficiency, and coverage of wireless communication systems~\cite{8741198}. These surfaces consist of many passive and cost-effective elements capable of controlling the propagation environment by properly adjusting the direction of coming signals. These distinctive properties make RIS a promising solution for broad connectivity in the next generation of wireless networks.
Previously, intelligent surfaces that are not reconfigurable
~\cite{subrt2012intelligent, 8108330} and  
reconfigurable multiple-input multiple-output (MIMO) systems
~\cite{1367557,cetiner2008patent} have been proposed for orthogonal multiple access (OMA) systems. 
Also, it is shown that RISs can have notable use cases and boost performance when merged with other emerging technologies such as non-orthogonal multiple access (NOMA)~\cite{9000593,9079918}.   

NOMA has been a topic of research as a promising new technology for the next generation of wireless communications. Specifically, in the downlink, power-domain NOMA aims to serve two or more users by sharing the same time/frequency/code resource block~\cite{3GPP}. At the transmitter side, NOMA squeezes the information signals using superposition coding. Before decoding the intended signal, the stronger user applies successive interference cancellation, i.e., first decodes the weaker user's signal and then subtracts it from the received signal.  

In this paper, we incorporate the RIS in downlink NOMA to improve the quality of the  weak user's channel. Unlike many other time division duplex (TDD)-based RIS-assisted NOMA (RIS-NOMA) systems like~\cite{9000593, 9079918,9345507,9500202}, we consider a frequency division duplex (FDD) system. The FDD-based RIS-NOMA is more challenging in the sense that the channel must be estimated at the receiver and fed back to the transmitter via a limited feedback channel. The availability of the quantized channel gains instead of perfect channel state information (CSI) creates the following two major issues. First, the quantized channel gains can result in a severe rate loss~\cite{7968348,9103094}. This phenomenon is more harmful when the quantized channel gains inaccurately change the order of NOMA users. Second, the phase information obtained from the overall quantized channel restricts the performance of the RIS~\cite{9223720}. Motivated by this, we investigate the impact of quantizing the overall RIS-aided channel and the direct channel gains on the system's rate loss.     

The limited feedback problem in RIS-aided systems is studied in~\cite{chen2020adaptive,9347974, kim2021learning, 9625462}. Ref.~\cite{chen2020adaptive} proposes a cascaded codebook design and bit partitioning strategy in the presence of line-of-sight (LoS) and non-LoS (NLoS) channels. In~\cite{9347974}, the feedback is divided into two parts, channel feedback and angle information feedback. 
In particular,~\cite{9347974} uses a random vector quantizer (RVQ) codebook to quantize the channel followed by feeding back the indices related to the angle of arrival (AoA) and angle of departure (AoD) information of the cascade channel matrix. Similarly,~\cite{kim2021learning} designs a codebook-based limited feedback protocol for RIS using learning methods. In~\cite{9625462}, authors aim to reconstruct the channel using the signal strength feedback and exploiting the sparsity and low-rank properties. None of the above limited feedback methods  can be applied to our RIS-NOMA. Particularly, the feedback methods in~\cite{9347974,chen2020adaptive, kim2021learning} mainly try to send the normalized channel vectors to perform beamforming at the transmitter. Further,~\cite{9625462} determines the channel direction while the channel gains are estimated based on the distance and not precisely. 
However, our RIS-NOMA system requires precise channel gains to accurately order the users and perform the superposition coding. Also, it needs the overall RIS-aided channel vector to adjust the RIS. 
The main contributions of this paper are as follows: 
\begin{itemize}
    \item We provide a limited feedback framework for RIS-NOMA systems and analyze its performance.
    \item We find an upper bound on the rate loss caused by quantization.
\end{itemize}
We conduct numerical simulations to evaluate the sum rate and the rate loss of the limited feedback RIS-NOMA system. The results verify our theoretical derivations.      


\textbf{Notations:} In this paper, $j=\sqrt{-1}$. Small letters, bold letters, and bold capital letters designate
scalars, vectors, and matrices, respectively. Superscripts $(\cdot)^T$ and $(\cdot)^\dagger$ denote, respectively, the transpose and the transpose-conjugate operations.  Further, $|x|$, $\mathbb{E}[x]$, and $\mathbb{V}[x]$ denote the absolute, the expected, and the variance of $x$, respectively. The operation $\angle \mathbf{x}$  calculates the element-wise angle of the vector $\mathbf{x}$ and $\lfloor\cdot\rfloor$ is the floor function. Finally, $\gamma(n, x) = \int_0^x t^{n-1}e^{-t}dt$ denotes the lower incomplete gamma function.   

\section{System Model}\label{sec:system_model}
Our system model is shown in Fig.~\ref{fig_sum_rate} in which a single-input single-output system model similar to~\cite{9333565,9693963} is considered. NOMA is a suitable multiple access technique for a single-antenna setup because multi-user solutions are not applicable.
The base station (BS) uses NOMA to simultaneously serve two users named User~1 and User~2.\footnote{User pairing is out of the scope of this paper. We assume that the users are paired using one of the existing methods in the literature such as~\cite{8016604,kim2021lowcomplexity,9500202}. The complexity of rate loss calculation increases as the number of NOMA users grows.}
The distance from User~2 to the BS is more than that of User~1. To determine which user is near, the BS captures the distance information through a channel quality indicator (CQI). 
The purpose of the RIS is to serve the far user to improve the channel quality\cite{9000593,9079918}. The RIS is equipped with $N$ antenna elements that ideally can direct the incident signal to any arbitrary directions in $[-\frac{\pi}{2}, \frac{\pi}{2}]$.  

Like~\cite{9024675,9139273}, it is assumed that the perfect CSI is estimated at the users. This assumption allows us to focus on studying the impact of the channel gain and phase vector quantization errors. Recently,~\cite{omid2022low} has investigated the impact of CSI impairments such as erroneous channel estimation or delay in feedback on beamforming in a RIS-NOMA system without considering quantization error. 
As a promising solution, our work on beamforming in relay networks with channel statistics \cite{jing2008network} and quantized feedback~\cite{4641953,7745957} can be extended to the underlying  RIS-NOMA system to study the CSI impairments. 

\begin{figure}[!t]
    \centering
    \includegraphics[scale=0.8]{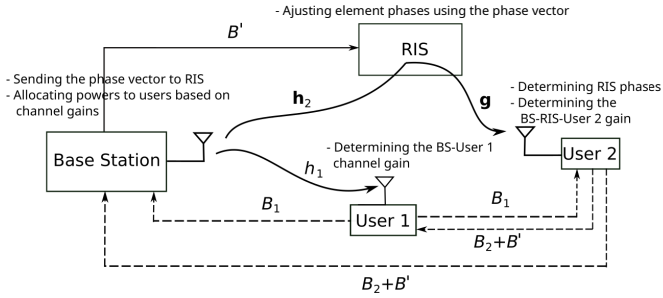}
    \caption{The limited feedback RIS-NOMA system model. $B_1$ bits and $B_2+B^\prime$ bits are allocated to User~1 and  User~2, respectively.}
    \label{fig_sum_rate}
\vspace*{-0.6cm}
\end{figure} 

\subsection{Transmit Channel Model}
We recall that both users are capable of estimating their channels. That is, User 1 obtains $h_1\in \mathbb{C}$
and User 2 obtains $\mathbf{h}_2\in\mathbb{C}^{N\times 1}$ and $\mathbf{g}\in\mathbb{C}^{N\times 1}$~\cite{8879620}. The channels capture the small-scale fading and path loss effects. For User~1, $h_1 = \sqrt{{L}_1}h_1^\prime$, where $L_1 = d_1^{-\alpha_1}$ is due to the path loss. The parameters $d_1$ and $\alpha_1$ denote the distance between the BS and User~1 and the path loss factor, respectively. Further, $h_1^\prime\sim \mathcal{CN}(0, 1)$ denotes the small-scale Rayleigh fading. We define the channel gain $H_1 = \abs*{h_1}^2$ with the probability density function (pdf) of  $f_{H_1}\paren{H_1} = \frac{1}{L_1}e^{-\frac{H_1}{L_1}}$.
Also, the channel between the BS and the RIS is defined as $\mathbf{h}_2 = \sqrt{L_2}\mathbf{h}_2^\prime$, where $L_2 = d_2^{-\alpha_2}$ and $\mathbf{h}_2^\prime\sim \mathcal{CN}(\boldsymbol{0}, \mathbf{I})$. The channel between the RIS and User~2 is given by $\mathbf{g} = \sqrt{L_g}\mathbf{g}^\prime$, where $L_g = d_g^{-\alpha_g}$ and $\mathbf{g}^\prime \sim \mathcal{CN}(\boldsymbol{0}, \mathbf{I})$. We note that $\mathbf{I}$ represents the identity matrix of size $N\times N$. In fact, the small-scale fading from the BS to User~2 is subject to the double-Rayleigh fading~\cite{9333565, 9478912, trigui2021bit}. Another possible channel model from the BS to the RIS is Rician fading. Since the RIS is helping the far user, it is reasonable to assume that its distance from the BS is large~\cite{9000593,9079918}. Thus, it is likely that the LoS channel is blocked by moving objects or buildings justifying a Rayleigh fading model. The parameters $d_2$ and $d_g$ denote the distance between the transmitter and the RIS and the distance from the RIS to User~2, respectively. Further, the parameters $\alpha_2$ and $\alpha_g$ denote the path loss factors. The effective overall channel between the BS and User 2 is defined as $\tilde{h}_2 = {\mathbf{g}^T\boldsymbol{\Theta}\mathbf{h}_2}$. Correspondingly, the channel gain is $\tilde{H}_2 = \abs*{\mathbf{g}^T\boldsymbol{\Theta}\mathbf{h}_2}^2$ in which $\boldsymbol{\Theta} = \text{diag}\paren*{\boldsymbol{\theta}}$, where $\boldsymbol{\theta} = [e^{j\phi_1}, \cdots, e^{j\phi_N}]$ and $\phi_i \in [-\frac{\pi}{2}, \frac{\pi}{2}]$ reflect the impact of the RIS.  
The optimal values of $\boldsymbol{\Theta}$ result in the maximum gain of 
$H_2 = \paren*{\sum_{i=1}^N\abs*{h_{2,i}}\abs*{g_i}}^2$. Deriving the exact pdf of $H_2$ is complicated. For the sake of simplicity, we first use the following upper bound on the pdf of the random variable $z = \sum_{i=1}^N|h_{2,i}||g_i|$~\cite{9079918}:
\begin{equation} \label{eq_PDF_z}
    \Scale[0.9]{f_z(z) \leq \frac{C_1}{C_2} \paren*{\frac{z}{C_2}}^{\frac{3N-2}{2}}e^{-2\frac{z}{C_2}},} 
\end{equation}
where $C_1 = \frac{2^N\pi^\frac{N}{2}\Gamma^N\paren*{\frac{3}{2}}}{\paren*{\frac{3N-2}{2}}!}$ and $C_2 = \sqrt{L_2L_g}$. Through extensive simulations, it is shown that this bound is tight~\cite{9079918}. Without loss of generality, we assume $N$ is an even number. Then, noting that $H_2 = z^2$, we have $f_{H_2}\paren*{H_2} = \frac{1}{2\sqrt{H_2}}f_z\paren*{\sqrt{H_2}}$. Finally, an upper bound on the pdf of $H_2$ follows as:
\begin{equation}\label{eq_PDF_H2}
    \Scale[0.9]{f_{H_2}\paren*{H_2} \leq \frac{C_1}{C_2}\frac{1}{2\sqrt{H_2}}\paren*{\frac{\sqrt{H_2}}{C_2}}^{\frac{3N-2}{2}}e^{-\frac{2\sqrt{H_2}}{C_2}}.}
\end{equation}

\subsection{Feedback Channel}
In FDD systems, the downlink channel is estimated at the user side and then fed back to the BS and the other user using the limited feedback channel. 
In our system model shown in Fig.~\ref{fig_sum_rate}, User~1 quantizes the channel gain $H_1$ and maps it to $q(H_1)$. Then, the index of $q(H_1)$ is fed back to the BS using $B_1$ bits. Since User 2 communicates with the BS through the RIS, the phase information should be sent to the RIS as well. In this regard, first, User~2 maps the overall channel vector $\mathbf{G}\mathbf{h}_2$ to $Q(\mathbf{G}\mathbf{h}_2)$ using  ${B}^\prime$ bits, where $Q(\cdot)$ is a RVQ and $\mathbf{G} = \text{diag}(\mathbf{g})$. 
User~2 then determines the phase vector $\boldsymbol{\theta}$ of the quantized channel $\mathbf{G}\mathbf{h}_2$ denoted by $\boldsymbol{\theta}_Q$. The exact structure of quantizers $q(\cdot)$ and $Q(\cdot)$ is discussed in the next section, but does not change the overall characteristics of the feedback channel model.
Next, User~2 quantizes the channel gain $H_{2,Q} = |\boldsymbol{\theta}_Q^T\mathbf{G}\mathbf{h}_2|^2$, i.e., $q(H_{2,Q})$, with $B_2$ bits.  
User~2 feeds the corresponding ${B_2} + {B}^\prime$ bits back to the BS. The feedback link from User~2 to the BS is assumed to support $B_2+B^\prime$ bits, although, the direct links might be blocked~\cite{chen2020adaptive}. The same explanation holds for the feedback link from User~1 to the BS.  

In our system model, the BS, the RIS, and the users  are fixed and the phase vector $\boldsymbol{\theta}_Q$ and the channel gains $H_1$ and $H_2$ are only required once for every channel coherence time.   

\subsection{Sum Rate}
To maximize the sum rate by efficiently allocating the transmit power and subject to some minimum rate for each user, the following optimization problem can be defined
\begin{maxi!}
{{\beta}}{R_1 + R_2\label{eq:objective_power_simp}}{\label{eq:opt_power_simp}}{}
\addConstraint{{R_1, R_2}}{{\geq R_{th}}\label{power_a}}
\addConstraint{{P_1 + P_2}}{{= P,}\label{power_b}}
\end{maxi!}
where $\Scale[0.9]{R_1 = \text{log}_2\paren*{1 + \beta P H_1}}$ and $\Scale[0.9]{R_2 = \text{log}_2\paren*{1 + \frac{\paren*{1-\beta}PH_2}{\beta P H_2 + 1}}}$. The power allocation can be parameterized by a factor $\beta$ such that $P_1 = \beta P$ and $P_2 = \paren*{1-\beta}P$. In~\cite{7890454}, given full CSI,  Problem~\eqref{eq:opt_power_simp} is extended to an arbitrary number of users and individual minimum rate constrains. Using the solution given in~\cite[Eq.~15]{7890454}, we obtain the optimum factor $\beta$ for $H_2 \leq H_1$ as $\Scale[0.9]{\beta^* = \frac{PH_2 - \epsilon}{\paren*{1 + \epsilon}P H_2}}$ where $\epsilon = 2^{R_{th}}-1$. Hence, $\Scale[0.9]{R_1 = \text{log}_2\paren*{1 + \frac{PH_2H_1 - \epsilon H_1}{H_2\paren*{1 + \epsilon}}}}$ and $R_2 = R_{th}$. The threshold $R_\textit{th}$ is the same for all users to ease the formulation but the approach works for arbitrary thresholds. Further, there are other useful objective functions to be considered. 
For example, similar to Ref.~\cite{7968348}, we can study the rate fairness in our RIS-NOMA system with limited feedback and quantization error. 

\section{Uniform and Random vector Quantizers}\label{sec:quantizer}
In this section, we describe uniform quantizers and RVQs, used in our system. We use uniform quantizers to compress the scalar channel gains  and RVQs to quantize the overall channel vector. 


To quantize $H_1$, we define the following uniform quantizer $q: \mathbb{R} \to \mathbb{R}$, shown in Fig.~\ref{fig:quantizer}: 
\begin{align}\label{eq:quantizer}
    q(x) = \begin{cases} \lfloor \frac{x}{\delta}\rfloor \times \delta, & x \leq \paren*{2^{B_1}-1}\delta, \\
    \paren*{2^{B_1}-1}\delta, & x > \paren*{2^{B_1}-1}\delta,
    \end{cases}
\end{align}
where $x$ is any non-negative real number and $\delta$ denotes the size of quantization partitions. The index of $q(H_1)$ is fed back by $B_1$ bits.  
The method in~\eqref{eq:quantizer} quantizes the channel gain to the left boundary of the partition instead of the center point. When the gain is quantized to the center point, the quantized value might be higher than the true channel gain. This can frequently cause outage at the weak user, i.e., solving the optimization problem in~\eqref{eq:opt_power_simp} may result in allocating insufficient power to the weak user. Thus,  Constraint~\eqref{power_a} may not hold for the quantized channel gain. To avoid the outage, we quantize the channel gain to the left boundary which guarantees the power allocation to the weak user is more than the needed optimal value.\footnote{The optimal power allocation, i.e., $P_1^*$ and $P_2^*$, is obtained by determining $\beta^*$ in~\eqref{eq:opt_power_simp}.} The uniform quantization is selected for simplicity but our approach works for non-uniform quantization as well.

\begin{figure}
    \centering
    \includegraphics[scale=0.8]{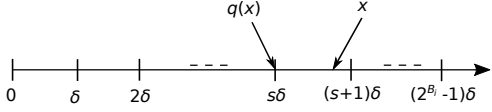}
    \caption{Applied uniform quantizer for quantizing channel gains ($i=1, 2$).}
    \label{fig:quantizer}
    \vspace*{-0.6cm}
\end{figure}

In general, there are two options for quantizing  the vector $\mathbf{G}\mathbf{h}_2$: vector quantization and scalar quantization, applied to vector's elements. 
Since the number  of elements in RISs can be large, scalar quantization will require a huge number of feedback bits and may not be practical. Inspired by this, we use random vector quantization in which the feedback bits can be far less than the number of elements\footnote{RVQ is a simple method but not the most efficient to quantize a vector with a limited number of feedback bits~\cite{6272722}. 
Other techniques such as
Lloyd Algorithm~\cite{gersho2012vector} 
and variable-length limited feedback beamforming~\cite{6897978} which outperform the RVQ can be used to enhance the performance of the underlying limited feedback system although they increase the complexity.}. We define the RVQ codebook $\mathcal{W} = \{\mathbf{w}_1, \mathbf{w}_2, \dots, \mathbf{w}_M\}$, in which the codeword $\mathbf{w}_i \in \mathbb{C}^{N\times 1}$, is  the quantized overall RIS-aided channel vector  $\mathbf{G}\mathbf{h}_2$. The codebook $\mathcal{W}$ is generated by selecting each of $M=2^{B^\prime}$ vectors independently from a uniform distribution on the complex unit sphere~\cite{1715541}. 
 
We aim to maximize the channel gain using the codebook such that
\begin{equation}\label{eq_max_w}
    \Scale[0.9]{Q\paren*{\mathbf{G}\mathbf{h}_2} = \underset{\mathbf{w}\in \mathcal{W}}{\text{argmax}} \abs{\mathbf{w}^\dagger\mathbf{G}\mathbf{h}_2}^2.}
\end{equation}
Further, we let $\Scale[0.9]{\boldsymbol{\phi}_Q = \angle Q(\mathbf{G}\mathbf{h}_2)}$ and $\Scale[0.9]{\boldsymbol{\theta}_Q = e^{j\boldsymbol{\phi}_Q}}$ where $\Scale[0.9]{\boldsymbol{\phi}_Q = [\phi_{1,Q}, \cdots, \phi_{N,Q}]^T}$. The channel gain $\Scale[0.9]{H_{2,Q} = \abs*{\boldsymbol{\theta}_Q^T\mathbf{G}\mathbf{h}_2}^2}$ takes any non-negative value 
and is mapped  to $q\paren*{H_{2,Q}}$ using the quantizer in~\eqref{eq:quantizer}. As mentioned before, a uniform quantizer is applied to $H_{2,Q}$ instead of $H_2$. Since $H_2$ and $H_{2,Q}$ do not necessarily belong to the same partition, their quantized values might be different, i.e., $q(H_{2, Q}) \leq q(H_2)$. The index of $q\paren*{H_{2,Q}}$ is sent using $B_2$ feedback bits. The uniform quantizers applied to $H_1$ and $H_{2,Q}$ include the same number of partitions, i.e., $B_1=B_2=B$.     
Defining $ \eta=\frac{H_{2,Q}}{H_2}$, we have   $\eta \in [0, 1]$.   
Deriving the pdf of $\eta$ is not straightforward, but needed in some analysis later. The exact pdf of $\eta$ for a $2\times 1$ Rayleigh channel vector and a large number of feedback bits is derived in ~\cite{4034215}. 
However, each element in $\mathbf{Gh}_2$ is a double-Rayleigh variable and its pdf is different from that of Rayleigh channels. Recently, the study of the pdf of $\Scale[0.9]{\Big|\sum_{i=1}^{N}|h_{2,i}||g_i|e^{j\kappa_i}\Big|^2}$ has been the topic of research in RIS-aided systems~\cite{9138463,9333565}. Note that $\kappa_i$ does not necessarily equal to $\phi_{i,Q}$. In~\cite[Lemma 1]{9333565}, the pdf of the random variable $\Scale[0.9]{\Big|\sum_{i=1}^{N}|h_{2,i}||g_i|e^{j\kappa_i}\Big|^2}$ where $\kappa_i$ is treated as a phase-noise is accurately approximated by a Gamma random variable. Following the same approach, our extensive empirical study shows that the pdf of the random variable $\eta$ can be approximated by the pdf of a beta random variable with the shape parameters $r_1$ and $r_2$, i.e.,
\begin{equation}\label{eq:eta_beta}
    \Scale[0.9]{f_{\eta}(\eta) \approx \frac{1}{\mathcal{B}\paren{r_1,r_2}} \eta^{r_1-1}\paren*{1-\eta}^{r_2-1},}
\end{equation}
where  $ \Scale[0.9]{r_1=\paren*{\frac{\mathbb{E}[\eta]\paren*{1-\mathbb{E}[\eta]}}{\mathbb{V}[\eta]}}\mathbb{E}[\eta]}$, $ \Scale[0.9]{r_2 = \paren*{\frac{\mathbb{E}[\eta]\paren*{1-\mathbb{E}[\eta]}}{\mathbb{V}[\eta]}}\paren*{1-\mathbb{E}[\eta]}}$, and $\mathcal{B}\paren*{r_1,r_2}=\frac{\Gamma(r_1+r_2)}{\Gamma(r_1)\Gamma(r_2)}$ represents a normalization constant that ensures the total probability is 1. The values of $\mathbb{E}[\eta]$ and $\mathbb{V}[\eta]$ depend on the RIS's size $N$ and the feedback bits $B^\prime$. The empirical results show a close resemblance between the real pdf and the approximation, but we do not have space to show them in this paper. 

When the full CSI is available at the BS, the user ordering is always accurate and the rates are calculated using~\eqref{eq:opt_power_simp}. In a limited feedback system,
an inaccurate user ordering can impose severe rate loss. Even if the user ordering is accurate, the quantization can reduce the achievable sum rate. 

\section{Rate Loss Analysis}\label{sec:rate_loss}
Let us name the user with a higher channel gain  the \textit{strong user}. We calculate the rate loss only for the strong user because the weak user will not experience quantization rate loss. When instead of full CSI, the quantized channel gains are used in Problem~\eqref{eq:opt_power_simp}, we call the resulting power allocation factor $\beta_q$.  

In general, the strong user's rate loss is obtained as
\begin{align}\label{eq_rateloss}
    \Scale[0.9]{\Delta R} & \Scale[0.9]{= R_i - R_{i,q}   = \text{log}_2\paren*{1 + P X_i} - \text{log}_2\paren*{1 + P X_{i,q}}} \nonumber \\
    & \Scale[0.9]{= \text{log}_2\paren*{1 + \frac{P \Delta X}{1 + PX_{i,q}}}\leq \text{log}_2\paren*{1 + P\Delta X},}
\end{align}
where $\Delta X = X_i - X_{i,q}$ denotes the normalized signal-to-noise ratio (SNR) loss. When user ordering is accurate, we have $\Scale[0.9]{X_1 = \beta H_1~(X_2 = \beta H_2)}$ and $\Scale[0.9]{X_{1,q} = \beta_q H_1~(X_{2,q} = \beta_q H_{2,Q})}$. For an inaccurate user ordering, $X_i$ is similar to the accurate one while $\Scale[0.9]{X_{1,q} = \frac{\paren*{1-\beta_q}Pq\paren*{H_1}}{\beta_q P q\paren*{H_1}+1}}$ and $\Scale[0.9]{X_{2,q} = \frac{\paren*{1-\beta_q}Pq\paren*{H_{2,Q}}}{\beta_q P q\paren*{H_{2,Q}}+1}}$. Based on~\eqref{eq_rateloss}, an upper bound on the average rate loss is found as
\begin{align}\label{eq_delta_x}
    \mathbb{E}[\Delta R] & \leq \mathbb{E}[\text{log}_2\paren*{1 + P\Delta X}] 
    \leq \text{log}_2\paren*{1 + P \mathbb{E}[\Delta X]}.
\end{align}
The second inequality is due to Jensen's inequality. Thus, to find the upper bound, we first derive $\Delta X$ and then calculate the expectation of $\Delta X$, i.e., $\mathbb{E}[\Delta X]$. However, the calculation of $\Delta X$ heavily depends on the values of $H_1$ and $H_2$. The main three Super Regions for this calculation, as shown in Fig.~\ref{fig:graphic},  are:
\begin{itemize}
    \item Super Region I: This consists of the conditions in which $q(H_1)=0$ and/or $q(H_{2,Q})=0$ which results in $\beta_q= \infty$. 
    \item Super Region II: This includes the main partitions of User~2's uniform quantizer, i.e., $\delta \leq q(H_{2,Q}) < \paren*{2^B-1}\delta$. 
    \item Super Region III: This includes the upper marginal partition of User 2's uniform quantizer, i.e., $q(H_{2,Q}) = \paren*{2^B-1}\delta$.
\end{itemize}
We denote $\Delta X$ in Super Region I as $\Delta X_\text{I}$. In what follows, we  calculate $\mathbb{E}[\Delta X_\text{I}]$, i.e., the expected  normalized SNR loss in each region. 
\begin{figure}
    \centering
    \includegraphics[scale=0.45]{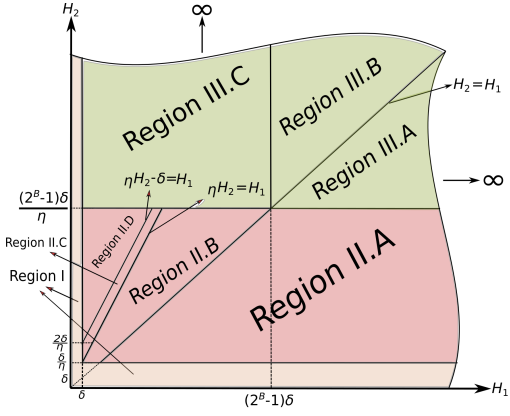}
    \caption{Presentation of all the possible regions for the rate loss.}
    \label{fig:graphic}
    \vspace*{-0.6cm}
\end{figure}

\subsection{Super Region I} Since $\beta_q= \infty$,
 in this super region, NOMA is not feasible and we let $\mathbb{E}[\Delta X_\text{I}]=0$.  

\subsection{Super Region II} 
\begin{lemma} \label{lemma_1}
\normalfont
The total average normalized SNR loss of Super Region~II is 
\begin{align}\label{eq:DXII}
    \Scale[0.83]{\mathbb{E}[\Delta X_\text{II}] \leq \sqrt{\paren*{2^B-1}\delta}\Bigg[C_3
      +\delta\paren*{C_4 + C_5\text{E}_1\paren*{\frac{\delta}{L_1}}}\Bigg]  + C_5 \sqrt{2\delta},}
\end{align}
where $\Scale[0.9]{C_3 = C_6+C_7 + C_8}$ and $\Scale[0.9]{C_4 = C_9 + C_{10}}$. In detail, $\Scale[0.9]{C_5 \geq C_5^{'}\mathbb{E}\left[ \frac{1}{\sqrt{\eta}}\right]}$, $\Scale[0.9]{C_6 \geq C_6^{'}  \mathbb{E}\left[ \frac{1-\eta}{\sqrt{\eta}}\right]}$, $\Scale[0.9]{C_7 \geq C_7^{'}  \mathbb{E}\left[ \frac{1-\eta}{\sqrt{\eta}}\right]}$, $\Scale[0.9]{C_8 \geq C_7^{'}\mathbb{E}\left[(1-\eta)\sqrt{\eta}\right]}$, $\Scale[0.9]{C_9 \geq C_9^{'}  \mathbb{E}\left[ \frac{1}{\sqrt{\eta}}\right]}$, and $\Scale[0.9]{C_{10} \geq C_{10}^{'}\mathbb{E}\left[ \frac{1}{\sqrt{\eta}}\right]}$ in which $\Scale[0.9]{C_5^{'} =\frac{C_1C_2}{2^{\frac{3N+2}{2}}}\paren*{\frac{3N+2}{2}}!}$, $\Scale[0.9]{C_6^{'} = \frac{C_1C_2}{2^{\frac{3N+2}{2}}}\paren*{\frac{3N+2}{2}}! + \frac{C_1L_1}{2^{\frac{3N-2}{2}}C_2}\paren*{\frac{3N-2}{2}}!}$, $\Scale[0.9]{C_7^{'} = \frac{C_1C_2^3}{2^{\frac{3N+6}{2}}L_1}\paren*{\frac{3N+6}{2}}!}$, $\Scale[0.9]{C_9^{'} = \frac{C_1}{2^{\frac{3N-2}{2}}C_2}\paren*{\frac{3N-2}{2}}! + \frac{C_1L_1}{2^{\frac{3N-6}{2}}C_2^3}\paren*{\frac{3N-6}{2}}!}$, and $\Scale[0.9]{C_{10}^{'} = \frac{C_1C_2}{2^{\frac{3N+2}{2}}L_1}\paren*{\frac{3N+2}{2}}!}$. Also, $ \Scale[0.9]{\text{E}_1(x) = \int_x^\infty \frac{e^{-t}}{t}dt}$ denotes the exponential-integral function.  
\end{lemma}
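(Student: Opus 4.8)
The plan is to compute $\mathbb{E}[\Delta X_{\text{II}}]$ by writing $\Delta X$ explicitly on this region and then integrating it against the three (assumed independent) sources of randomness $H_1$, $H_2$ and $\eta=H_{2,Q}/H_2$. Starting from \eqref{eq_rateloss}, I would substitute the optimal factor $\beta^*=\tfrac{PH_w-\epsilon}{(1+\epsilon)PH_w}$, where $H_w$ is the weak user's true gain, together with its quantized version $\beta_q$ obtained by replacing $H_w$ with $q(H_1)$ or $q(H_{2,Q})$. Super Region~II forces $q(H_1)\neq 0$ and $\delta\le q(H_{2,Q})<(2^B-1)\delta$, so several sub-cases coexist and must be treated one by one: accurate ordering with User~1 strong, giving $\Delta X=(\beta-\beta_q)H_1=\tfrac{\epsilon H_1}{(1+\epsilon)P}\big(\tfrac{1}{q(H_{2,Q})}-\tfrac{1}{H_2}\big)$; accurate ordering with User~2 strong, giving $\Delta X=\beta H_2-\beta_q H_{2,Q}=H_2(\beta-\eta\beta_q)$; and the inaccurate-ordering cases, which use the saturating SINR forms of $X_{1,q}$ and $X_{2,q}$ listed after \eqref{eq_rateloss}.

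Each sub-case is reduced to a sum of elementary monomials using three deterministic facts about the quantizer \eqref{eq:quantizer}: $q(H_{2,Q})\le H_{2,Q}=\eta H_2\le H_2$, the residual bound $H_{2,Q}-q(H_{2,Q})<\delta$, and the inequality $q(x)\ge x/2$ valid for $x\ge\delta$ (which replaces $1/q(H_{2,Q})$ and $1/q(H_1)$ by $2/(\eta H_2)$ and $2/H_1$). Writing $H_2-q(H_{2,Q})=(1-\eta)H_2+\big(H_{2,Q}-q(H_{2,Q})\big)$ cleanly separates the RVQ direction error (the $(1-\eta)$ part) from the scalar quantization error (the part $<\delta$). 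The crucial region fact is that $q(H_{2,Q})<(2^B-1)\delta$ confines $H_{2,Q}$, hence $\sqrt{H_2}<\sqrt{(2^B-1)\delta}/\sqrt{\eta}$; peeling exactly one factor of $\sqrt{H_2}$ from each positive-power term with this bound pulls the global $\sqrt{(2^B-1)\delta}$ out front and deposits $1/\sqrt{\eta}$ into the $\eta$-averages, while the residual scalar-error contribution at the lowest partition yields the additive $C_5\sqrt{2\delta}$. Dropping the negative contributions (valid for an upper bound) then leaves, in every sub-case, a finite sum of monomials $H_1^{s}\,\eta^{p}(1-\eta)^{r}\,H_2^{m}$ with $s\in\{-1,0,1\}$.

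Because $H_1$, $\eta$ and $H_2$ are treated as independent, each monomial expectation factorizes. The $H_2$-factors are integrated against the pdf bound \eqref{eq_PDF_H2}; with $u=\sqrt{H_2}$ these become complete Gamma integrals $\int_0^\infty u^{k}e^{-2u/C_2}\,du=\tfrac{k!\,C_2^{k+1}}{2^{k+1}}$, supplying the factorials and powers of $C_2$ that define $C_5',\dots,C_{10}'$ (for instance the $(\tfrac{3N+6}{2})!$ in $C_7'$ is the signature of an $H_2^{2}$ moment). The $\eta$-factors, averaged with \eqref{eq:eta_beta}, collapse onto the three beta-moments $\mathbb{E}[1/\sqrt{\eta}]$, $\mathbb{E}[(1-\eta)/\sqrt{\eta}]$ and $\mathbb{E}[(1-\eta)\sqrt{\eta}]$ multiplying the primed constants. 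The $H_1$-factors use $f_{H_1}(H_1)=\tfrac{1}{L_1}e^{-H_1/L_1}$: a numerator $H_1$ integrates to $L_1$ (the source of the $L_1$-terms inside $C_6',C_7',C_9'$), whereas a factor $1/H_1$ integrated over $H_1\ge\delta$---the restriction forced by $q(H_1)\neq0$---produces $\int_{\delta}^{\infty}\tfrac{1}{H_1}f_{H_1}\,dH_1=\tfrac{1}{L_1}\text{E}_1(\delta/L_1)$, which is exactly the exponential-integral term. Grouping the resulting coefficients under the common $\eta$-moments gives $C_3=C_6+C_7+C_8$ and $C_4=C_9+C_{10}$ and assembles \eqref{eq:DXII}.

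The hardest part is the case bookkeeping. I must check that the inaccurate-ordering contributions, whose $X_{i,q}$ are saturating SINRs rather than simple products, are still dominated by the same three $\eta$-moments and introduce no $H_2$-power beyond those already catalogued, and that the sign-dropping never discards a positive dominant term. A second delicate point is the $\delta$-scaling of the User~1-strong terms, which carry $1/H_2$ rather than a positive power of $H_2$: here one uses that Super Region~II confines $H_2$ to an $O(\delta)$ window, so the region-restricted $H_2$-moment is itself $O(\sqrt{(2^B-1)\delta})$ and is majorized by the same prefactor uniformly in $N\ge2$; the convergence of every $H_2$-integral hinges on $N\ge2$, since the most negative exponent that occurs, $u^{(3N-6)/2}$, is integrable precisely then. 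Finally, the factorization of the expectation rests on the empirical beta model \eqref{eq:eta_beta} and the independence of $\eta$ from the gains, so the statement is properly read as an upper bound on this approximate expectation.
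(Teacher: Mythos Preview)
Your plan is structurally the same as the paper's: split Super Region~II into the four ordering sub-cases (the paper labels them II.A--II.D), bound $\Delta X$ pointwise in each, integrate against $f_{H_1}$ and the pdf bound \eqref{eq_PDF_H2}, exploit the confinement $H_{2,Q}<(2^B-1)\delta$ to pull out $\sqrt{(2^B-1)\delta}$, and average over $\eta$ last. You also correctly identify that the $1/H_1$ terms over $H_1\ge\delta$ are what generate $\mathrm{E}_1(\delta/L_1)$.

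Two technical points where the paper proceeds differently. First, for the User~1--strong accurate case (Region II.A) the paper does \emph{not} use $q(x)\ge x/2$; instead it bounds the denominator $q(H_{2,Q})$ upward by $H_2$ and the numerator $q(H_{2,Q})$ downward by $H_{2,Q}-\delta$, obtaining the $P$-independent bound $\Delta X_{\text{II.A}}\le(1-\eta)H_1+\delta H_1/H_2$. Your $q(x)\ge x/2$ route keeps the $\epsilon/((1+\epsilon)P)$ prefactor and produces terms of the shape $(1-\eta)H_1/(\eta H_2)$ rather than $(1-\eta)H_1$, so the resulting constants would not match the stated $C_6',C_9'$. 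Second, the paper does not peel a $\sqrt{H_2}$ and extend to $\infty$; it keeps the finite upper limit, recognises the lower incomplete gamma function, and applies the crude bound $\gamma(n,x)\le (n-1)!\,x$. Your peeling trick is legitimate and in fact slightly tighter (it gives $(3N/2)!$ where the paper gets $((3N+2)/2)!$), but again would not reproduce the exact $C_k'$ in the statement. Both routes yield an upper bound of the displayed form; the paper's choices are what pin down the particular constants listed in the lemma.
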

\begin{proof}
Please see Appendix~\ref{Appendix_A}.
\end{proof}
\subsection{Super Region III} 
\begin{lemma} \label{lemma_2}
\normalfont    
The total average normalized SNR loss for Super Region~III is bounded by
\begin{align}\label{eq:DXIII}
    &\Scale[0.85]{\mathbb{E}[\Delta X_\text{III}]  \leq e^{-2\frac{\sqrt{\paren*{2^B-1}\delta}}{C_2}}\Bigg[2C_{11}\paren*{1 + \frac{\paren*{2^B-1}\delta}{L_1}}}\times \nonumber \\
    &\Scale[0.85]{\paren*{1+\paren*{ 2\sqrt{\frac{\paren*{2^B-1}\delta}{  C_2^2}}}^{\frac{3N+2}{2}}}  
    + C_{12}\paren*{1 + \paren*{2\sqrt{\frac{\paren*{2^B-1}\delta}{ C_2^2}}}^{\frac{3N-2}{2}}}\Bigg]}.
\end{align}
where $C_{11} = \frac{C_1C_2^2}{2^{\frac{3N+4}{2}}}\paren*{\frac{3N+4}{2}}!$ and $C_{12} = \frac{C_1C_2L_1}{2^\frac{3N+2}{2}}\paren*{\frac{3N}{2}}!$. 
\end{lemma}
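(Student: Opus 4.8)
The plan is to read Super Region~III as the \emph{saturation} regime of User~2's uniform quantizer and to bound the resulting clipping loss by a tail expectation of $H_2$. The defining condition $q(H_{2,Q})=\paren*{2^B-1}\delta$ holds exactly when $H_{2,Q}\geq\paren*{2^B-1}\delta$, and since $\eta=H_{2,Q}/H_2\leq 1$ forces $H_2\geq H_{2,Q}\geq\paren*{2^B-1}\delta$, both $H_2$ and $H_{2,Q}$ quantize to the same top level, so $q(H_{2,Q})=q(H_2)=\paren*{2^B-1}\delta$. The loss in this region is therefore entirely due to clipping: the rate-loss formula~\eqref{eq_rateloss} still evaluates $X_i$ at the \emph{true} gain $H_2$ while $X_{i,q}$ is evaluated at the clipped gain $\paren*{2^B-1}\delta$ through $\beta_q$. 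First I would write $\Delta X_\text{III}$ explicitly in the accurate-ordering case (with $\beta_q$ fixed at the level $\paren*{2^B-1}\delta$) and separately in the inaccurate-ordering case, and then take the expectation region by region.

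The key simplification is to exploit $H_{2,Q}\leq H_2$ twice: once to enlarge the event $\{H_{2,Q}\geq\paren*{2^B-1}\delta\}$ to the more convenient $\{H_2\geq\paren*{2^B-1}\delta\}$, and once to dominate $\beta_q H_{2,Q}$ by an $\eta$-free quantity. Because $\Delta X_\text{III}\geq 0$, integrating over the larger event only loosens the bound; this is precisely why, in contrast to Lemma~\ref{lemma_1}, no moments of $\eta$ survive in the constants $C_{11},C_{12}$. I would then factor the joint expectation into an $H_1$-integral against $\tfrac{1}{L_1}e^{-H_1/L_1}$ and an $H_2$-tail integral against the bound~\eqref{eq_PDF_H2}. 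The $H_1$-integration, combined with the power-allocation factors $\beta,\beta_q$, is what produces the prefactor $\paren*{1+\frac{\paren*{2^B-1}\delta}{L_1}}$ multiplying the $C_{11}$ term and the lone constant multiplying the $C_{12}$ term.

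For the $H_2$-tail integral I would substitute $z=\sqrt{H_2}$, which turns~\eqref{eq_PDF_H2} into the clean monomial-times-exponential form $z^{(3N-2)/2}e^{-2z/C_2}$ and turns the tail into $\int_{\sqrt{\paren*{2^B-1}\delta}}^{\infty} z^{m}e^{-2z/C_2}\,dz$ for the two exponents $m=\tfrac{3N+2}{2}$ and $m=\tfrac{3N-2}{2}$ dictated by $C_{11}$ and $C_{12}$ (the former collecting the $\int H_2 f_{H_2}$ contribution, the latter the $L_1\int f_{H_2}$ contribution). Scaling $u=2z/C_2$ converts each into $\paren*{C_2/2}^{m+1}\Gamma\!\paren*{m+1,\,a}$ with $a=2\sqrt{\paren*{2^B-1}\delta/C_2^2}$, and it is exactly this step that produces the common factor $e^{-2\sqrt{\paren*{2^B-1}\delta}/C_2}$ and the argument $a$ appearing in the bracketed terms of~\eqref{eq:DXIII}.

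The main obstacle I anticipate is bounding the upper incomplete gamma function in closed form to match the stated $\paren*{1+a^{m}}$ structure. Using $\Gamma(m+1,a)=m!\,e^{-a}\sum_{k=0}^{m}a^{k}/k!$ for integer order, the partial exponential sum must be dominated by a constant multiple of its first and last terms, i.e.\ $\sum_{k=0}^{m}a^{k}/k!\lesssim 1+a^{m}$, so that the $m!$ and the surviving powers of $C_2$ combine with the normalization from~\eqref{eq_PDF_H2} to assemble exactly $C_{11}=\frac{C_1C_2^2}{2^{(3N+4)/2}}\paren*{\tfrac{3N+4}{2}}!$ and $C_{12}=\frac{C_1C_2L_1}{2^{(3N+2)/2}}\paren*{\tfrac{3N}{2}}!$. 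The remaining bookkeeping---keeping the powers of $C_2$ consistent between the PDF normalization and the scaled gamma integral, and verifying that the inaccurate-ordering sub-case is dominated by the accurate-ordering contribution---is routine once this gamma bound is in place. Summing the ordering sub-cases over the enlarged tail event then yields~\eqref{eq:DXIII}.
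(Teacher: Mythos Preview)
Your plan is essentially the paper's: restrict to the tail event $H_2\geq(2^B-1)\delta$, integrate the density bound~\eqref{eq_PDF_H2} after the change of variable $z=\sqrt{H_2}$ to obtain upper incomplete gamma functions $\Gamma(m+1,a)$ with $a=2\sqrt{(2^B-1)\delta}/C_2$, and bound the partial exponential sum by a multiple of $1+a^{m}$ to assemble $C_{11}$ and $C_{12}$. The paper organizes the region slightly differently, splitting into three sub-regions III.A--III.C according to where $H_1$ sits (above $H_2$; between $(2^B-1)\delta$ and $H_2$; below $(2^B-1)\delta$) rather than the two-way accurate/inaccurate split you describe. This matters in two places you should adjust: first, the ``accurate'' case itself has two flavors (III.A with User~1 strong and III.C with User~2 strong), and it is III.C---the range $H_1<(2^B-1)\delta$---that produces the $\tfrac{(2^B-1)\delta}{L_1}$ addend; second, the inaccurate-ordering sub-case III.B is not dominated by the accurate contribution but supplies its own $C_{11}$ term, and it is III.A $+$ III.B together that yield the factor $2C_{11}$. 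With those bookkeeping fixes your outline matches the paper's argument.
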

\begin{proof}
    Please see Appendix~\ref{Appendix_C}.
\end{proof}

Finally, we have the following theorem on the expectation of the total rate loss for the quantized RIS-NOMA.  
\begin{theorem} \label{theorem_1}
\normalfont
The total average rate loss for the quantized RIS-NOMA system with limited feedback is upper bounded by
\begin{equation}\label{eq:rate_loss_upper_bound}
\mathbb{E}[\Delta R] \leq \text{log}_2\paren{1 + P\mathbb{E}[\Delta X]},   
\end{equation}
where 
\begin{align}\label{eq:DX}
     \Scale[0.8]{\mathbb{E}[\Delta X]} &  \Scale[0.8]{\leq  \sqrt{\paren*{2^B-1}\delta}\Bigg[C_3
      +\delta\paren*{C_4 + C_5\text{E}_1\paren*{\frac{\delta}{L_1}}}\Bigg]  + C_5 \sqrt{2\delta}} \nonumber \\  
      & \quad  \Scale[0.8]{+ e^{-2\frac{\sqrt{\paren*{2^B-1}\delta}}{C_2}}\Bigg[ 2C_{11}\paren*{1 + \frac{\paren*{2^B-1}\delta}{L_1}}  \paren*{1+\paren*{ 2\sqrt{\frac{\paren*{2^B-1}\delta}{ C_2^2}}}^{\frac{3N+2}{2}}}}\nonumber \\
    & \quad \quad \quad \quad \quad  \quad\quad \Scale[0.85]{+ C_{12}\paren*{1 + \paren*{2\sqrt{\frac{\paren*{2^B-1}\delta}{ C_2^2}}}^{\frac{3N-2}{2}}}\Bigg]}.
\end{align}
\end{theorem}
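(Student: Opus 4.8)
The plan is to assemble the theorem directly from the two regional lemmas together with the Jensen bound already established in \eqref{eq_rateloss} and \eqref{eq_delta_x}. First I would observe that the outer inequality $\mathbb{E}[\Delta R]\leq \log_2(1+P\mathbb{E}[\Delta X])$ in \eqref{eq:rate_loss_upper_bound} is precisely the two-step envelope of \eqref{eq_rateloss}--\eqref{eq_delta_x}: the pointwise bound $\Delta R\leq \log_2(1+P\Delta X)$, obtained by dropping the factor $1+PX_{i,q}\geq 1$ in the denominator, followed by taking expectations and invoking Jensen's inequality on the concave map $x\mapsto\log_2(1+Px)$. This reduces the entire theorem to verifying the bound on $\mathbb{E}[\Delta X]$ in \eqref{eq:DX}; nothing further is needed on the logarithmic envelope itself.

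Second, I would bound $\mathbb{E}[\Delta X]$ by decomposing the sample space of $(H_1,H_2)$ into the three Super Regions depicted in Fig.~\ref{fig:graphic}. Writing $\mathbb{1}_k$ for the indicator of Region~$k$ and setting $\Delta X_k=\Delta X\,\mathbb{1}_k$, the crucial structural fact is that the indicators form a partition of unity, so that linearity of expectation yields $\mathbb{E}[\Delta X]=\mathbb{E}[\Delta X_\text{I}]+\mathbb{E}[\Delta X_\text{II}]+\mathbb{E}[\Delta X_\text{III}]$ with no double counting. Concretely I would argue that every realization of the quantizer outputs $q(H_1)$ and $q(H_{2,Q})$ falls into exactly one of: the degenerate case $\beta_q=\infty$ (Region~I), an interior partition $\delta\leq q(H_{2,Q})<(2^B-1)\delta$ (Region~II), or the saturated top partition $q(H_{2,Q})=(2^B-1)\delta$ (Region~III), and that these levels exhaust all admissible quantizer outputs.

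Third, I would substitute the per-region contributions. Region~I contributes $\mathbb{E}[\Delta X_\text{I}]=0$ since NOMA is infeasible there by construction; Region~II is bounded by the right-hand side of \eqref{eq:DXII} via Lemma~\ref{lemma_1}; and Region~III is bounded by the right-hand side of \eqref{eq:DXIII} via Lemma~\ref{lemma_2}. Adding these three upper bounds termwise reproduces \eqref{eq:DX} exactly, and feeding the result into the logarithmic envelope from the first step completes the proof.

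The main obstacle is not this assembly but making the regional decomposition airtight. The Region definitions are phrased through User~2's quantizer output $q(H_{2,Q})$, while the accurate versus inaccurate user-ordering sub-cases are folded into the definitions of $X_{i,q}$ in \eqref{eq_rateloss}; I would need to confirm that these jointly cover every realization of $(H_1,H_2)$ without overlap, since any gap or overlap would destroy the clean additivity of $\mathbb{E}[\Delta X]$. The heavy analytic labor---integrating $\Delta X$ against the pdf \eqref{eq_PDF_H2} and the beta approximation \eqref{eq:eta_beta} to produce the constants $C_3,\dots,C_{12}$---is deferred to Appendices~\ref{Appendix_A} and~\ref{Appendix_C} through the two lemmas, so at the theorem level the only genuine risk is an incomplete or non-disjoint partition.
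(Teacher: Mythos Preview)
Your proposal is correct and follows essentially the same approach as the paper: the paper's proof simply states that $\mathbb{E}[\Delta X]=\mathbb{E}[\Delta X_\text{I}]+\mathbb{E}[\Delta X_\text{II}]+\mathbb{E}[\Delta X_\text{III}]$, sets the first term to zero, and substitutes Lemmas~\ref{lemma_1} and~\ref{lemma_2} for the remaining two, with the outer Jensen envelope already recorded in \eqref{eq_delta_x}. Your write-up is more explicit about why the three super regions form a partition and why the logarithmic envelope holds, but the underlying argument is identical.
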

\begin{proof}
We know that $\mathbb{E}[\Delta X] = \mathbb{E} [\Delta X_\text{I}]+ \mathbb{E} [\Delta X_\text{II}]+\mathbb{E}[\Delta X_\text{III}]$. Noting $\mathbb{E} [\Delta X_\text{I}]=0$ and replacing $\mathbb{E}[\Delta X_\text{II}]$ and $\mathbb{E}[\Delta X_\text{III}]$ with~\eqref{eq:DXII} and~\eqref{eq:DXIII}, respectively, results in~\eqref{eq:DX}.
\end{proof}
To guarantee that the rate loss approaches  zero as $B$ increases, one feasible solution is to define $\delta = \zeta_1 2^{-{\zeta_2} B}$ for $\zeta_1, \zeta_2 \in (0, 1)$. The parameters $\zeta_1$  and $\zeta_2$ are design parameters and should be optimized. Such a parameter  optimization is out of the scope of this paper although the choice of $\zeta_1$ and $\zeta_2$ will not affect the system model. In simulations, we will intuitively select $\zeta_1$ and $\zeta_2$ to achieve good performance. 

\section{Numerical Results}\label{sec:numerical_results}
We compare the sum rate performance of the RIS-NOMA system with limited feedback to that of the RIS-assisted orthogonal multiple access (RIS-OMA) system. The RIS-NOMA system is described in Section~\ref{sec:system_model}. For the RIS-OMA, we consider the same system model in Section~\ref{sec:system_model} and replace NOMA with OMA. 
Furthermore, since power allocation is not required in the RIS-OMA, the channel gains are not fed back although the RIS-aided channel vector information should be fed back for beamforming at the RIS. 

The parameters are set according to~\cite{9345507} as follows. The distances are selected as $d_1$=10 m, $d_2$=40 m, and $d_g$=10 m. Further, the path loss exponents are set as $\alpha_1 = 3.5$, $\alpha_2 = 2.5$, and $\alpha_g = 2.5$. The number of RIS elements is set to $N=10$. 

\begin{figure}
     \centering
    \begin{subfigure}[t]{0.24\textwidth}
        \raisebox{-\height}{\includegraphics[scale=0.43]{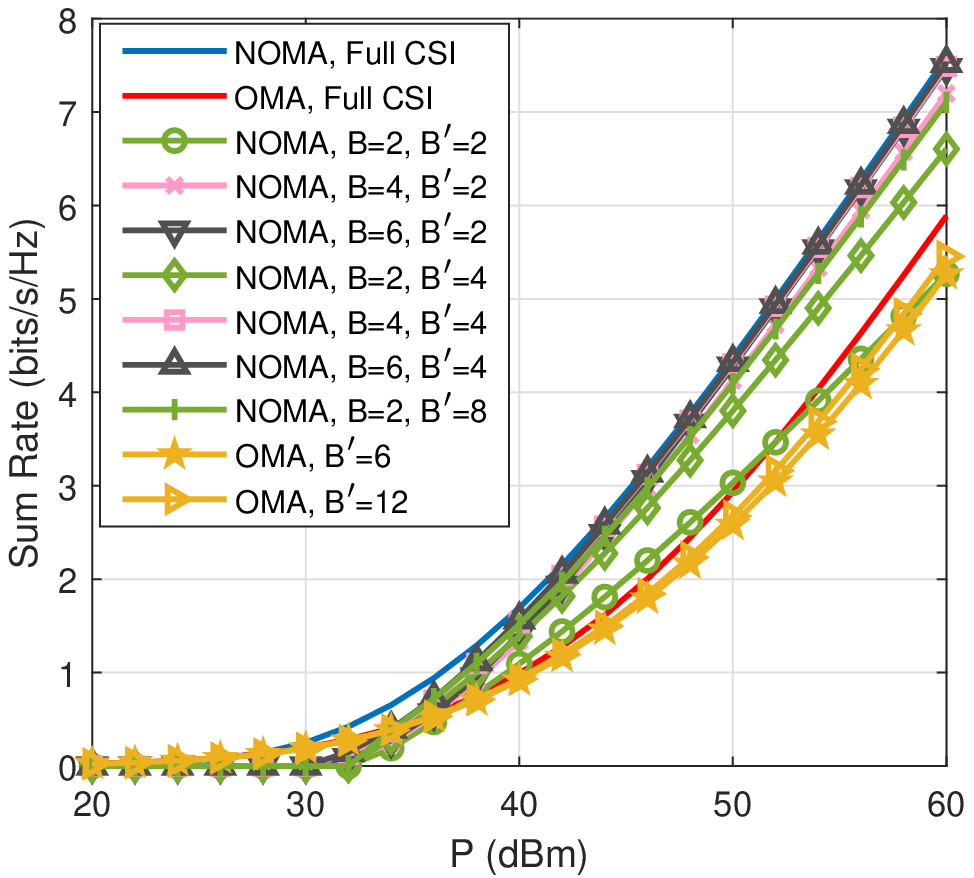}}
        \caption{}
    \end{subfigure}
    \hfill
    \begin{subfigure}[t]{0.24\textwidth}
        \raisebox{-\height}{\includegraphics[scale=0.43]{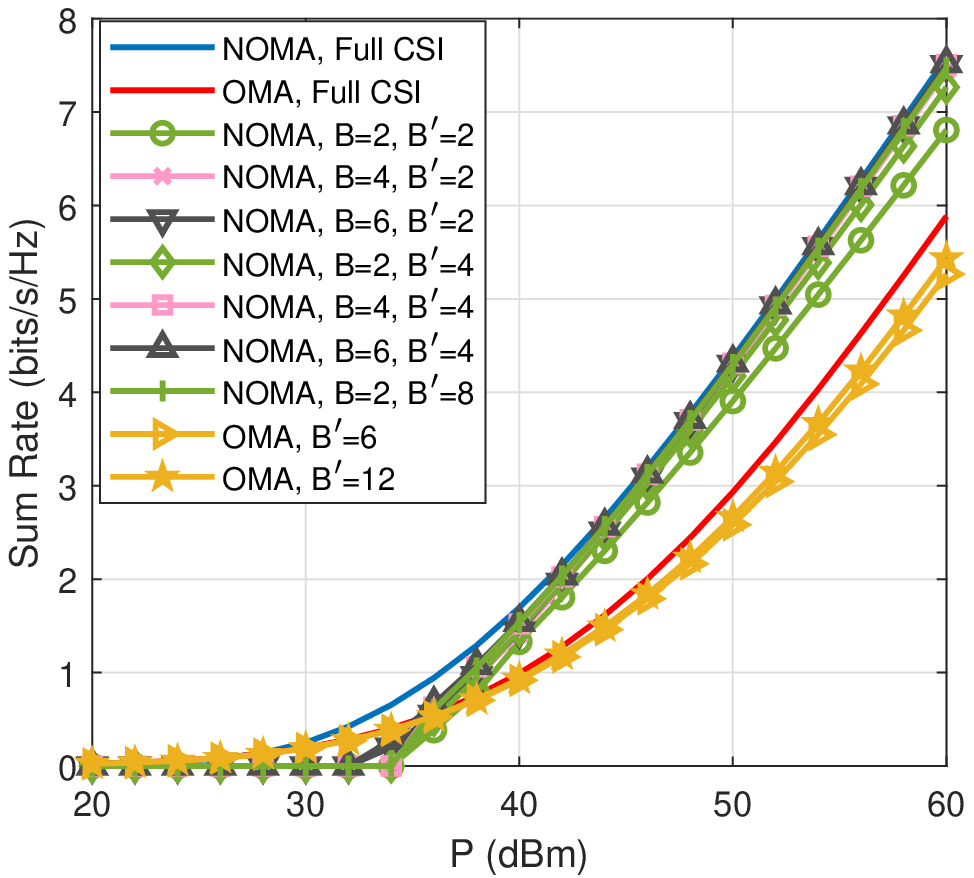}}
         \caption{}
    \end{subfigure}
    \caption{Performance of the sum rate versus the transmit power $P$ for (a) $\zeta_1 = 10^{-5}$ and $\zeta_2 = 0.95$ and (b) $\zeta_1 = 0.5\times 10^{-5}$ and $\zeta_2 = 0.95$.}
   \label{fig:sum_rate}
   \vspace*{-0.6cm}
\end{figure}

We present the simulation results for the sum rate performance versus the total transmit power $P$ for various feedback bits in Fig.~\ref{fig:sum_rate}. The transmit power depends on the users' path loss such that the power should compensate for the propagation loss. 
Simulation is conducted for the full CSI RIS-NOMA, full CSI RIS-OMA, limited feedback RIS-NOMA, and limited feedback RIS-OMA. To study the impact of $\delta$ where $\delta = \zeta_12^{-\zeta_2 B}$, we set $\zeta_2 = 0.95$ and consider two different values for $\zeta_1$. We set $\zeta_1 = 10^{-5}$ and $0.5\times 10^{-5}$ in Figs.~\ref{fig:sum_rate}(a) and \ref{fig:sum_rate}(b), respectively. The full CSI RIS-NOMA achieves the highest sum rate. When $B$ and $B^\prime$ increase, the limited feedback RIS-NOMA's sum rate and the limited feedback RIS-OMA's sum rate approach those of the full CSI RIS-NOMA and the full CSI RIS-OMA, respectively. This is consistent with our findings in Theorem~\ref{theorem_1}. For instance, $B=6$ and $B^\prime=4$, the sum rate is almost the same as that of the full CSI.   
Further, as we decrease the resolution of the quantizer, the length of the region in which we quantize the channel gain to 0 enlarges. Adopting a zero channel gain results in a zero sum rate as indicated in the low power portion of Fig.~\ref{fig:sum_rate}. However, in RIS-OMA, we do not impose any minimum rate constraint. This causes the RIS-OMA's sum rate to be equal or slightly higher than the RIS-NOMA's sum rate at low power levels although there is no guarantee for the minimum sum rate. 

We also observe that for $B=2$ where $B=B_1=B_2$ and $B^\prime=2$, i.e., a total of 6 feedback bits, the limited feedback RIS-NOMA's sum rate shows different behavior compared to the limited feedback RIS-OMA with $B^\prime=6$. At low transmit powers, the limited feedback RIS-OMA's sum rate is better than that of the limited feedback RIS-NOMA for both $\delta$ values. As we increase the power, the limited feedback RIS-NOMA improves the sum rate in comparison to the limited feedback RIS-OMA. However, at very high transmit power levels, the limited feedback RIS-NOMA's slope is smaller than that of the limited feedback RIS-OMA, as shown in Fig.~\ref{fig:sum_rate}(a). 
When we select a smaller $\delta$, as in Fig.~\ref{fig:sum_rate}(b), for any  given $B$ and $B^\prime$, the limited feedback RIS-NOMA's sum rate becomes higher than that of the limited feedback RIS-OMA. 
Another important observation is the impact of allocating $B$ and $B^\prime$ on the limited feedback RIS-NOMA's sum rate. For instance, let us assume the total number of feedback bits is 12. When $B=4$ and $B^\prime=4$, in Fig.~\ref{fig:sum_rate}(a), the sum rate is higher than that of $B=2$ and $B^\prime=8$. Whereas, given the same $B$ and $B^\prime$, in Fig.~\ref{fig:sum_rate}(b), these two schemes achieve almost the same sum rate.  

\begin{figure}
\centering
      \renewcommand{\baselinestretch}{1}
  \includegraphics[scale=0.43]{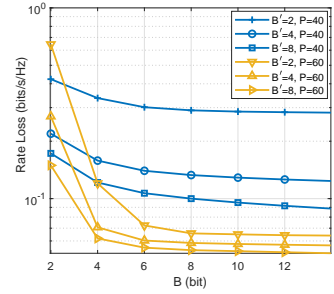}
\caption{The average rate loss versus the number of the feedback bits.}
\label{fig:rate_loss_vs_bits}
\vspace*{-0.6cm}
\end{figure}

Fig.~\ref{fig:rate_loss_vs_bits} compares the limited feedback RIS-NOMA's average rate loss for $\zeta_1 = 0.5\times 10^{-5}$ and $\zeta_2 = 0.95$. As the feedback bits and the power increase, the rate loss reduces. When $B^\prime$ is fixed, by increasing $B$, the rate loss at power $60$ dBm reduces faster than at power $40$ dBm. In fact, increasing the power can compensate for the quantization error. 

\section{Conclusion}\label{sec:conclusion}
In this paper, we studied an FDD-based RIS-NOMA system with a limited feedback channel.  We used a RVQ to quantize the RIS-aided channel vector. Also, we considered a uniform quantizer for the channel gains. We then analyzed the rate loss of the strong user under the accurate and inaccurate user ordering conditions. Since the BS receives the quantized channel gains, inaccurate user ordering can happen often. We derived the rate loss resulted from quantization and showed that the rate loss essentially depends on the number of feedback bits, $B$ and $B^\prime$. From the simulations, we observed that the parameters $B$ and $B^\prime$ affect the sum-rate, the rate loss, and the probability that NOMA is not useful. As the number of feedback bits increases, the quantized RIS-NOMA's  performance approaches that of the full CSI RIS-NOMA.

\appendices
\section{Proof of Lemma~\ref{lemma_1}}\label{Appendix_A}
We divide this super region into Regions II.A-II.D, as shown in Fig.~\ref{fig:graphic}. Due to space limitations, we only provide the detailed calculation of the upper bound and constants for Region II.A. The upper bound and constants for other regions can be found similarly. 

Region II.A: In this region, $H_2 \leq H_1$ which means User~1 is the strong user. It is clear that the output of the RVQ results in  $H_{2,Q} \leq H_2 \leq H_1$ and the uniform quantizer results in $q(H_{2,Q}) \leq q(H_1)$. Thus, the user ordering is accurate and the BS recognizes User~1 as the strong user. The average normalized SNR loss is upper bounded as
\begin{align}\label{eq_XIIA_new}
 \Scale[0.9]{\mathbb{E}[\Delta X_\text{II.A}] \leq  C_6\sqrt{\paren*{2^B-1}\delta} + C_9\delta\sqrt{\paren*{2^B-1}\delta}.}
\end{align}
The proof of \eqref{eq_XIIA_new} and the values of $C_6$ and $C_9$ are provided in Appendix~\ref{Appendix_B}. 

Region II.B: In this region, $H_1 \leq H_2$ indicates User~2 is stronger than User~1. It is possible that the RVQ leads to 
$H_{2,Q} \leq H_1 \leq H_2$. Obviously, the uniform quantizer results in $q(H_1) \geq q(H_{2,Q})$. Thus, the BS recognizes User~1 as the strong user which is not an accurate user ordering. The average normalized SNR loss is expressed as
\begin{align}\label{eq_XIIB_new}
     \Scale[0.9]{\mathbb{E}[\Delta X_\text{II.B}]   \leq C_7\sqrt{\paren*{2^B-1}\delta},}
\end{align}
It should be mentioned that $C_7$ is a constant and bounded since $\Scale[0.9]{\mathbb{E}\left[\frac{1-\eta}{\sqrt{\eta}}\right]}$ is bounded. Also, $\Scale[0.9]{\mathbb{E}\left[\frac{1-\eta}{\sqrt{\eta}}\right]}$ can be approximated as $\Scale[0.9]{\frac{\mathcal{B}\paren*{r_1-\frac{1}{2}, r_2+1}}{\mathcal{B}\paren*{r_1, r_2}}}$ using~\eqref{eq:eta_beta}. 

Region II.C: In this region, $H_1 \leq H_2$ and using the RVQ results in
$H_1 < H_{2,Q} \leq H_2$. For $H_{2,Q} - H_1 < \delta$, the uniform quantizer leads to $q(H_1) = q(H_{2,Q})$. In such a region, the user ordering is inaccurate and the BS picks User~1 as the strong user. Hence, the average normalized SNR loss is obtained as
\begin{align}\label{eq_XIIC_new}
     \Scale[0.9]{\mathbb{E}[\Delta X_\text{II.C}] \leq C_{10}\delta\sqrt{\paren*{2^B-1}\delta} + C_5 \sqrt{2\delta},}
\end{align}
Region II.D: In this region, $H_1 \leq H_2$ and the RVQ leads to $H_1 < H_{2,Q} \leq H_2$. If the uniform quantizer results in $q(H_1) < q(H_{2,Q})$, the user ordering is accurate and the BS selects User~2 as the strong user.  Then, the average normalized SNR loss is bounded by
\begin{align}\label{eq_XIID_new}
    \Scale[0.9]{\mathbb{E}[\Delta X_\text{II.D}] \leq C_8\sqrt{\paren*{2^B-1}\delta} + C_5\delta\sqrt{\paren*{2^B-1}\delta} \text{E}_1\paren*{\frac{\delta}{L_1}},}
\end{align}
The constant $C_8$ is bounded, and using~\eqref{eq:eta_beta} we obtain the approximated value of $\mathbb{E}\left[(1-\eta)\sqrt{\eta}\right]$ as $\frac{\mathcal{B}\paren*{r_1+\frac{1}{2}, r_2+1}}{\mathcal{B}\paren*{r_1, r_2}}$ which is finite.
\section{Proof of~\eqref{eq_XIIA_new}}\label{Appendix_B}
In Region II.A, the normalized SNR loss is obtained as
\begin{align}\label{eq_DXIIA}
    \Scale[0.9]{\Delta X_\text{II.A}} & \Scale[0.9]{= \frac{PH_1H_2 - \epsilon H_1}{PH_2\paren*{1 + \epsilon}} - \frac{PH_1q\paren*{H_{2,Q}} - \epsilon H_1}{Pq\paren*{H_{2,Q}}\paren*{1 + \epsilon}}}  \nonumber \\
    & \Scale[0.9]{\overset{(a)}{\leq} \frac{PH_1H_2 - \epsilon H_1 - PH_1\paren*{H_{2,Q} - \delta} + \epsilon H_1}{PH_2\paren*{1 + \epsilon}}\overset{(b)}{\leq} \frac{H_1H_2 - H_1 H_{2,Q} + \delta H_1}{H_2}} \nonumber \\
    & \Scale[0.9]{\overset{(c)}{=} \frac{H_1 H_2 - \eta H_1 H_2 + \delta H_1}{H_2 } = \paren*{1 - \eta}H_1 + \delta \frac{H_1}{H_2}.}
\end{align}
The inequality~(a) follows from the fact that for any $\delta \leq q(H_{2,Q}) < \paren*{2^B-1}\delta$, the inequalities $q(H_{2,Q}) \geq H_{2,Q}-\delta$ and $H_{2,Q} \geq q(H_{2,Q})$ hold. The inequality~(b) is due to $\epsilon \geq 0$. Further, (c) is true because $H_{2,Q} = \eta H_2$. For a given constant $\eta$, the expectation of $\Delta X_\text{II.A}$  over the super region defined by $\frac{\delta}{\eta} \leq H_2 < \frac{\paren*{2^B-1}\delta}{\eta}$ and $H_2 \leq H_1$ is given as 
\begin{align}\label{eq:DXIIAeta}
    & \Scale[0.9]{\mathbb{E}[\Delta X_\text{II.A} | \eta] \leq \paren*{1 - \eta} \underbrace{\int_{\frac{\delta}{\eta}}^{\frac{\paren*{2^B-1}\delta}{\eta}} \int_{H_2}^{\infty} H_1f_{H_1}\paren*{H_1}f_{H_2}\paren*{H_2} dH_1dH_2}_{\equiv I_1}} \nonumber \\ 
    & \quad \Scale[0.9]{+ \delta \underbrace{\int_{\frac{\delta}{\eta}}^{\frac{\paren*{2^B-1}\delta}{\eta}} \int_{H_2}^{\infty}  \frac{H_1}{H_2}f_{H_1}\paren*{H_1}f_{H_2}\paren*{H_2}dH_1dH_2}_{\equiv I_2}}. 
\end{align}
Next, we compute the integrals $I_1$ and $I_2$. For $I_1$, we have
\begin{align} \label{eq_I1}
    \Scale[0.9]{I_1} & \Scale[0.9]{= \int_{\frac{\delta}{\eta}}^{\frac{\paren*{2^B-1}\delta}{\eta}}  e^{-\frac{H_2}{L_1}}\paren*{H_2 + L_1}f_{H_2}\paren*{H_2}dH_2} \nonumber \\
    & \Scale[0.9]{\leq \frac{C_1}{2}\underbrace{\int_{\frac{\delta}{\eta}}^{\frac{\paren*{2^B-1}\delta}{\eta}}
    \paren*{\frac{\sqrt{H_2}}{C_2}}^{\frac{3N}{2}}e^{-\paren*{\frac{H_2}{L_1} + \frac{2\sqrt{H_2}}{C_2}}}dH_2}_{\equiv I_{1,1}}} \nonumber \\
    & \quad \Scale[0.9]{+ \frac{C_1L_1}{2C_2^2}\underbrace{\int_{\frac{\delta}{\eta}}^{\frac{\paren*{2^B-1}\delta}{\eta}}\paren*{\frac{\sqrt{H_2}}{C_2}}^{\frac{3N-4}{2}}e^{-\paren*{\frac{H_2}{L_1} + \frac{2\sqrt{H_2}}{C_2}}}dH_2}_{\equiv I_{1,2}}.}
\end{align}
The integral $I_{1,1}$ is obtained as
\begin{align} \label{eq_I11_new}
    & \Scale[0.9]{I_{1,1}} \Scale[0.9]{\leq \int_0^{\frac{\paren*{2^B-1}\delta}{\eta}}
    \paren*{\frac{\sqrt{H_2}}{C_2}}^{\frac{3N}{2}}e^{-\paren*{\frac{H_2}{L_1} + \frac{2\sqrt{H_2}}{C_2}}}dH_2} \nonumber \\
    & \Scale[0.86]{\overset{(a)}{\leq} \int_0^{\frac{\paren*{2^B-1}\delta}{\eta}}
    \paren*{\frac{\sqrt{H_2}}{C_2}}^{\frac{3N}{2}}e^{-\frac{2\sqrt{H_2}}{C_2}}dH_2 \overset{(b)}{\leq} \frac{C_2}{2^{\frac{3N}{2}}}\paren*{\frac{3N+2}{2}}!\sqrt{\frac{\paren*{2^B-1}\delta}{\eta}}.} 
\end{align}
The inequality $(a)$ follows from the fact that  $e^{-\frac{H_2}{L_1}} \leq 1$. The inequality~$(b)$ is due to the definition of the lower incomplete gamma function  and using the upper bound $\Scale[0.9]{\gamma(n, x) \leq (n-1)! x}$.

Likewise, we obtain an upper bound on $I_{1,2}$ as
\begin{align}\label{eq_I12new}
    \Scale[0.9]{I_{1,2} 
    \leq \frac{C_2}{2^{\frac{3N-4}{2}}}\paren*{\frac{3N-2}{2}}!\sqrt{\frac{\paren*{2^B-1}\delta}{\eta}}.} 
\end{align}
Substituting~\eqref{eq_I11_new} and~\eqref{eq_I12new} into~\eqref{eq_I1} gives
\begin{align}  \label{eq_I1final_new}
    \Scale[0.9]{I_1}  &\Scale[0.9]{\leq \frac{C_1C_2}{2^{\frac{3N+2}{2}}}\paren*{\frac{3N+2}{2}}!\sqrt{\frac{\paren*{2^B-1}\delta}{\eta}}} \nonumber \\
    & \quad \Scale[0.9]{+ \frac{C_1L_1}{2^{\frac{3N-2}{2}}C_2}\paren*{\frac{3N-2}{2}}!\sqrt{\frac{\paren*{2^B-1}\delta}{\eta}} = C_6^{'}\sqrt{\frac{\paren*{2^B-1}\delta}{\eta}},}
\end{align}
Also, we calculate $I_2$ as
\begin{align}\label{eq_I2upper}
    \Scale[0.9]{I_2 = \int_{\frac{\delta}{\eta}}^{\frac{\paren*{2^B-1}\delta}{\eta}} e^{-\frac{H_2}{L_1}}   \paren*{1 + \frac{L_1}{H_2}}f_{H_2}\paren*{H_2}dH_2 \leq C_9^{'}\sqrt{\frac{\paren*{2^B-1}\delta}{\eta}},} 
\end{align}
Thus, the upper bound on $\mathbb{E}[\Delta X_\text{II.A} | \eta]$ is found as 
\begin{align}
    \Scale[0.9]{\mathbb{E}[\Delta X_\text{II.A} | \eta] \leq \paren*{1-\eta}I_1 + \delta I_2,}
\end{align}
where the upper bounds on $I_1$ and $I_2$ are derived in~\eqref{eq_I1final_new} and \eqref{eq_I2upper}, respectively. In order to calculate $\mathbb{E}[\Delta X_\text{II.A}]$, we have
\begin{align}\label{eq_XIIA}
    \Scale[0.9]{\mathbb{E}[\Delta X_\text{II.A}]} & \leq  \Scale[0.9]{\int_0^1 (1-\eta)I_1f_\eta(\eta)d\eta + \delta\int_0^1 I_2f_\eta(\eta)d\eta} \nonumber \\ 
    & \leq \Scale[0.9]{C_6^{'}\sqrt{\paren*{2^B-1}\delta}\int_0^1 \frac{1-\eta}{\sqrt{\eta}}f_\eta(\eta)d\eta} \nonumber \\
    &  \quad \Scale[0.9]{ + C_9^{'}\delta\sqrt{\paren*{2^B-1}\delta} \int_0^1\frac{1}{\sqrt{\eta}}f_\eta(\eta)d\eta} \nonumber \\ 
    & \Scale[0.9]{= C_6^{'} \mathbb{E}\left[ \frac{1-\eta}{\sqrt{\eta}}\right] \sqrt{\paren*{2^B-1}\delta} + C_9^{'}  \mathbb{E}\left[ \frac{1}{\sqrt{\eta}}\right] \delta\sqrt{\paren*{2^B-1}\delta}} \nonumber \\ 
    & \leq \Scale[0.9]{C_6 \sqrt{\paren*{2^B-1}\delta} + C_9 \delta\sqrt{\paren*{2^B-1}\delta},}
\end{align}
Note that $\Scale[0.9]{\mathbb{E}\left[ \frac{1-\eta}{\sqrt{\eta}}\right]}$ and $ \Scale[0.9]{\mathbb{E}\left[\frac{1}{\sqrt{\eta}}\right]}$ are finite and for the approximation in \eqref{eq:eta_beta} are, respectively, equal to $\Scale[0.9]{\frac{\mathcal{B}\paren*{r_1-\frac{1}{2}, r_2+1}}{\mathcal{B}\paren*{r_1, r_2}}}$ and $\Scale[0.9]{\frac{\mathcal{B}\paren*{r_1-\frac{1}{2},r_2}}{\mathcal{B}(r_1, r_2)}}$.

\section{Proof of Lemma~\ref{lemma_2}} \label{Appendix_C}
We divide Super Region III into Regions III.A-III.C, as shown in Fig.~\ref{fig:graphic}.  

Region III.A: In this region, User~1 has a better channel compared to User~2, i.e., $H_2 \leq H_1$. If the RVQ results in
$\paren*{2^B-1}\delta \leq H_{2,Q} \leq H_2 \leq H_1$, then uniform quantizer's outcome will be $q(H_1) = q(H_{2,Q})=\paren*{2^B-1}\delta$ and the ordering is accurate. In this region, the average normalized SNR loss is bounded by
\begin{align}\label{DXIIIA_new}
    \Scale[0.9]{\mathbb{E}[\Delta X_\text{III.A}] \leq} & \Scale[0.8]{e^{-2\frac{\sqrt{\paren*{2^B-1}\delta}}{C_2}}\Bigg[C_{11} \paren*{1+\paren*{ 2\sqrt{\frac{\paren*{2^B-1}\delta}{ C_2^2}}}^{\frac{3N+2}{2}}}} \nonumber \\ 
    & \quad \Scale[0.9]{+ C_{12}\paren*{1 + \paren*{2\sqrt{\frac{\paren*{2^B-1}\delta}{ C_2^2}}}^{\frac{3N-2}{2}}}}\Bigg], 
\end{align}

Region III.B: In this region, User~2 has a better channel and $q(H_1) = \paren*{2^B-1}\delta$, i.e., $\paren*{2^B-1}\delta \leq H_1 \leq H_2$. In such a region, the RVQ will result in either
$\paren*{2^B-1}\delta \leq H_1 \leq H_{2,Q}$ or $\paren*{2^B-1}\delta \leq H_{2,Q} \leq H_1$. Also, the uniform quantizer will provide $q(H_1) = q(H_{2,Q})=\paren*{2^B-1}\delta$. Hence, the quantizers will inaccurately change the users' order. The average normalized SNR loss for this region is bounded as
\begin{align}\label{DXIIIB_new}
    \Scale[0.9]{\mathbb{E}[\Delta X_\text{III.B}]\leq C_{11}e^{-2\frac{\sqrt{\paren*{2^B-1}\delta}}{C_2}}\paren*{1 + \paren*{2\sqrt{\frac{\paren*{2^B-1}\delta}{C_2^2}}}^{\frac{3N+2}{2}}}}.
\end{align}

Region III.C: In this region, User~1's  channel gain is lower than that of User~2 such that $\delta \leq q(H_1)<q(H_{2,Q}) = \paren*{2^B-1}\delta$. Thus, User~2 is the strong user and  the ordering is accurate. We obtain the expectation of the normalized SNR loss as
\begin{align}\label{eq_DXIIICeta_new}
    \Scale[0.85]{\mathbb{E}[\Delta X_\text{III.C}] \leq  2\frac{C_{11}}{L_1}\paren*{2^B-1}\delta e^{-2\frac{\sqrt{\paren*{2^B-1}\delta}}{C_2}} \paren*{1+\paren*{2\sqrt{\frac{\paren*{2^B-1}\delta}{C_2^2}}}^{\frac{3N+2}{2}}}}.
\end{align}

\bibliographystyle{IEEEtran}
\bibliography{IEEEabrv,references}
\end{document}